\documentclass[nofootinbib,10pt,twocolumn,preprintnumbers]{revtex4-1}
\usepackage{amsmath,amssymb,amsthm,appendix,bm,graphicx,float,array,multirow,rotfloat,caption,hyperref,cleveref,enumerate,mathdots,adjustbox,booktabs,parskip,mathtools,tikz,csquotes,rotating,empheq,braket}
\usetikzlibrary{arrows, positioning, decorations.pathmorphing, decorations.markings, decorations.pathreplacing, decorations.markings, matrix, patterns}
\usepackage{pgf,pgfarrows,pgfnodes,float,appendix, hyperref,slashed,breakurl,graphicx}
\definecolor{Color}{rgb}{0.28, 0.24, 0.55}
\definecolor{Orange}{rgb}{1,0.38,0.11}
\hypersetup{
    colorlinks = true,
    citecolor = Orange,
    linkcolor = Color,
    urlcolor  = Orange,
}
\definecolor{internationalorange}{rgb}{1.0, 0.31, 0.0}
\usepackage{graphicx}
\usepackage{subfigure}
\usepackage[margin=0.9in]{geometry}

\usepackage{enumerate}

\setlength{\tabcolsep}{6pt}
\usepackage{amsmath}

\usepackage{color, colortbl}
\definecolor{Gray}{gray}{0.8}
\definecolor{GrayLight}{gray}{0.4}
\definecolor{Darkgreen}{RGB}{30,120,30}
\definecolor{granate}{rgb}{0.8039,0.2,0.2}
\newcommand{\beq}{\begin{equation}}
\newcommand{\eeq}{\end{equation}}
\newcommand{\bea}{\begin{eqnarray}}
\newcommand{\eea}{\end{eqnarray}}

\usepackage{eso-pic}

\theoremstyle{plain}
\newtheorem*{thm*}{Theorem}
\newtheorem{thm}{Theorem}[section]

\newtheorem{prop}[thm]{Proposition}

\theoremstyle{definition}
\newtheorem{defn}[thm]{Definition}
\newtheorem*{defn*}{Definition}

\begin{document}
\title{\Large {\bf{Emergent spacetime and the ergodic hierarchy}}}
\author{Elliott Gesteau}
\affiliation{Division of Physics, Mathematics and Astronomy, California Institute of Technology, Pasadena, CA 91125}

\begin{abstract}
Various diagnostics of the emergence of an arrow of time in the bulk description of a holographic theory have been proposed, including the decay of some real time correlation functions and the appearance of type III$_1$ von Neumann algebras carrying half-sided modular inclusions. This note puts forward a close parallel between these diagnostics and a quantum formulation of the ergodic hierarchy of dynamical systems. Theories with an emergent spacetime appear to sit near the top of this hierarchy.
\end{abstract}

\maketitle
\section{Introduction}

One of the main goals of quantum gravity is to understand the emergence of spacetime. While there exists a wide range of holographic theories, very few of them have a semiclassical limit that recovers weakly coupled quantum fields on a classical, smooth, gravitational spacetime in the dual description. It is therefore of central importance to understand the conditions required for a holographic theory to possess a ``good" semiclassical limit of this kind. In the context of AdS/CFT \cite{Maldacena:1997re}, theories with such a semiclassical limit in the bulk are the ones in which the 't Hooft coupling is strong and the gauge rank $N$ is taken to infinity. It makes sense to ask what is special about these theories. It has long been realized that a diagnostic of bulk locality is quantum chaos \cite{Shenker:2013pqa}, but coming up with a fully satisfactory definition of quantum chaos is still an open problem. While various diagnostics \cite{Bhattacharyya:2019txx} exist and capture different chaotic features of quantum systems, there is no clear overarching picture.

On the other hand, the theory of chaos for classical dynamical systems is under much better control. Thanks to pioneering work from the second part of the twentieth century, various diagnostics of classical chaos have been firmly established. One of the main upshots is that there is not a single way of diagnosing chaos in a classical dynamical system, but instead, a full \textit{hierarchy} of them \cite{BERKOVITZ2006661}. Near the top of this hierarchy sits the notion of Kolomogorov system (K-system) \cite{Walters_1982}, which is a system with complete memory loss. One of the main achievements of the field was to introduce the notion of Kolmogorov--Sinai (KS) entropy \cite{Sinai_2010}, and to show that under certain conditions, a system is a K-system if and only if it has completely positive Kolmogorov--Sinai entropy \cite{Rokhlin_1967,Cornfeld1982ErgodicT}. Therefore, this entropy constitutes a quantitative diagnostic of strong chaotic properties.

The ergodic theory of classical dynamical systems has a quantum counterpart, though it is much less developed. Some notions of chaos motivated by the classical hierarchy have been identified \cite{Benatti1993DeterministicCI}, however, an entropic characterization of the notion of quantum K-system remains elusive. In particular, the dynamical entropy defined by Connes--Narnhofer--Thirring (CNT entropy) \cite{cmp/1104160061} falls short of being able to fully characterize K-systems. It is still an open problem to obtain some quantitative diagnostic of the K-property in the quantum case. 

The goal of this note is to show that the quantum analog of the classical ergodic hierarchy is surprisingly relevant to the emergence of time in holography. In this hierarchy, the looser properties known as mixing properties encode the late time decay of correlation functions, while stronger properties like the K-property and the closely related Anosov property are closely tied to the emergence of horizons and information loss. Some ``intrinsically quantum" diagnostics, like the type of emergent von Neumann algebras \cite{Leutheusser:2021frk,Leutheusser:2021qhd,Leutheusser:2022bgi,Witten:2021unn,Chandrasekaran:2022eqq,Furuya:2023fei,Faulkner:2022ada,Gesteau:2023hbq}, also fit nicely into this hierarchy. By better understanding the parallels between the quantum ergodic hierarchy and the emergence of semiclassical physics, we may discover new ways to characterize how to interpolate between quantum gravity and its semiclassical limit. 

A theme of central importance will be that of von Neumann algebras, as it is the natural setting to formulate the quantum analog of measure theory, which lies at the heart of the theory of dynamical systems. Interestingly, the emergence of nontrivial types of von Neumann algebras has recently been identified as a diagnostic of the appearance of a somewhat semiclassical bulk \cite{Leutheusser:2021qhd,Leutheusser:2021frk,Leutheusser:2022bgi}, and one of the goals of this note is to sharpen this connection.

The aim of this note is to spell out a detailed comparison between the ergodic hierarchy and the emergence of semiclassical physics in quantum gravity. Rather than providing any novel computations, here results from disparate areas of mathematics and physics are assembled in order to draw in-depth parallels. The hope is that this may facilitate further exploration in this direction.

\textbf{Note added:} The material presented here has been expanded upon in \cite{Ouseph2023}, after the author explained it to one of the authors of \cite{Ouseph2023} during a conference this summer. We decided to coordinate our releases.

\section{Ergodic classical systems and their hierarchy}

The mathematical investigation of chaos in classical dynamical systems is well-developed, and known as ergodic theory. It was realized in the second half of the twentieth century that chaotic systems can be organized into a \textit{hierarchy} of systems satisfying stronger and stronger properties \cite{sep-ergodic-hierarchy}. This section briefly recalls these properties and gives some intuition on them. 

Before being able to introduce the ergodic hierarchy, it is important to set the stage and define what we precisely mean by a classical dynamical system. The idea of a dynamical system is that it describes evolution of the observables of a physical system in its space of configurations. This space $X$ is then equipped with a one-parameter group of automorphisms $\sigma_t$, which describes time evolution.

Two main possible choices for $X$ are possible: either $X$ is a topological space, in which case one talks about topological dynamics, or $X$ is a measurable space equipped with a measure $\mu$. For our purposes, it will be more useful to look at the latter case, and we will only briefly comment on the quantum equivalent of the former. The reason is that the notion of dynamical system relevant to study the semiclassical limit of holography should retain some notion of state-dependence, since it is the entanglement properties of a reference ``vacuum" state that are responsible for the emergence of spacetime. As we will see, the classical analog of a quantum state is a probability measure. It is then natural to appeal to measured spaces here. In contrast, topological dynamics is state-independent in nature, and less relevant to the context of this note.

\begin{defn}
A dynamical system is a triple $(X,\mu,\sigma_t)$, where $X$ is a measurable space, $\mu$ is a probability measure on $X$, and $\sigma_t$ is a measure-preserving flow on $X$.\footnote{The general discussion assumes that we are looking at systems with continuous time, but discrete time can be defined completely analogously.} 
\end{defn}

Here, we want to find a way to quantify how chaotic a dynamical system is. It turns out that there is a \textit{hierarchy} of chaos in the theory of classical dynamical systems, which consists of five increasingly strong levels of chaos. More precisely, there is the chain of implications\begin{widetext}
\begin{align*}
\textbf{Ergodic}\Leftarrow\textbf{Weakly Mixing}\Leftarrow\textbf{Strongly Mixing}\Leftarrow\textbf{Kolmogorov}\Leftarrow\textbf{Bernoulli}.
\end{align*}
\end{widetext}
The rest of the text will show that quantum analogs of most of the properties on this list, as well as the closely related Anosov systems, have been proposed to diagnose some emergent properties of the bulk in holographic theories.

\subsection{Ergodicity and mixing}

The weakest characterization of chaos is given by the notion of \textit{ergodicity}. The intuition behind this notion is that on average, the dynamics decorrelates events $A$ and $B$ and turns them into independent events. The weakest possible way of making sense of such a statement is given by the following definition:

\begin{defn}
A classical dynamical system $(X,\mu,\sigma_t)$ is \textit{ergodic} if for all measurable sets $A,B\subset X$,
\begin{align}
\underset{T\rightarrow\infty}{\mathrm{lim}}\frac{1}{2T}\int_{-T}^T\mu(\sigma_t(B)\cap A)dt=\mu(B)\mu(A).
\end{align}
\end{defn}

The first way in which one can strengthen the property of ergodicity is by imposing convergence to zero of the average \textit{difference} between the measure of the interesection of the two events and their product. It is a strengthening of the notion of convergence towards independence.

\begin{defn}
A classical dynamical system $(X,\mu,\sigma_t)$ is \textit{weakly mixing} if for all measurable sets $A,B\subset X$,
\begin{align}
\underset{T\rightarrow\infty}{\mathrm{lim}}\frac{1}{2T}\int_{-T}^T\lvert\mu(\sigma_t(B)\cap A)-\mu(B)\mu(A)\rvert dt=0.
\end{align}
\end{defn}

Of course a way of making the convergence even stronger is to ask for it not just on average but pointwise. This leads to the notion of strong mixing:

\begin{defn}
A classical dynamical system $(X,\mu,\sigma_t)$ is \textit{strongly mixing} if for all measurable sets $A,B\subset X$,
\begin{align}
\underset{t\rightarrow\infty}{\mathrm{lim}}\mu(\sigma_t(B)\cap A)=\mu(B)\mu(A).
\end{align}
\end{defn}

\subsection{Kolmogorov systems}

So far, the characterizations of chaos in dynamical systems have been fairly simple and quantitative: they simply are a matter of limits. However, one can ask for stronger properties of chaos, that are more algebraic in nature. In particular, another way of thinking about chaos is that dynamics should make an observer lose all possible information about the initial data. This is formalized by asking that there are less and less measurable sets left in the $\sigma$-algebra as time passes. The most extreme version of the idea leads to the notion of K-system (Kolmogorov system) \cite{Walters_1982}, which asks for complete memory loss at infinite time.

\begin{defn}
Let $(X,\mu,\sigma_t)$ be a classical dynamical system, and let $\Sigma$ be the $\sigma$-algebra of measurable subsets of $X$. $(X,\mu,\sigma_t)$ is a Kolmogorov system (K-system) if there exists a $\sigma$-algebra $\Sigma_0\subset\Sigma$ such that:
\begin{itemize}
\item For $t>0$, $\sigma_t(\Sigma_0)\subset\Sigma_0$,
\item $\bigvee_{t\in\mathbb{R}}\sigma_t(\Sigma_0)=\Sigma$,
\item $\bigwedge_{t\in\mathbb{R}}\sigma_t(\Sigma_0)=\{\emptyset,X\}$.
\end{itemize}
\end{defn}

There are other characterizations of K-systems in terms of things more akin to mixing properties. For example:

\begin{prop}[\cite{Cornfeld1982ErgodicT}]
Let $(X,\mu,\sigma_t)$ be a classical dynamical system on a Lebesgue space. $(X,\mu,\sigma_t)$ is a K-system if and only if it is K-mixing, i.e. for any measurable $A_0\subset X$, and any $r$ measurable subsets $A_1,\dots,A_r$ of $X$,
\begin{align}
\underset{t\rightarrow\infty}{\mathrm{lim}}\underset{B\in\Sigma_{t,r}}{\mathrm{sup}}\lvert\mu(B\cap A_0)-\mu(B)\mu(A_0)\rvert=0,
\end{align}
where $\Sigma_{t,r}$ is the $\sigma$-algebra generated by the $\sigma_s(A_i)$, for $s\geq t$.
\end{prop}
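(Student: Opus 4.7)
The plan is to prove the two directions separately: ($\Rightarrow$) follows cleanly from the reverse martingale convergence theorem applied to the decreasing filtration $\{\sigma_t(\Sigma_0)\}_t$, while ($\Leftarrow$) requires constructing the filtration $\Sigma_0$, for which I would route through the Pinsker $\sigma$-algebra and invoke the Rokhlin--Sinai theorem.

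For ($\Rightarrow$), start from the three K-system axioms for some $\Sigma_0$. The spanning condition $\bigvee_t\sigma_t(\Sigma_0)=\Sigma$ lets me approximate each of $A_1,\ldots,A_r$ within $\epsilon$ in symmetric difference by an element of $\sigma_{-T}(\Sigma_0)$ for some large $T$. Because $\sigma_{s-T}(\Sigma_0)\subset\sigma_{t-T}(\Sigma_0)$ whenever $s\geq t$, every $B\in\Sigma_{t,r}$ lies, up to an error of order $r\epsilon$, in $\sigma_{t-T}(\Sigma_0)$. For such $B$,
\begin{align*}
\mu(B\cap A_0)=\int_B E[\mathbf{1}_{A_0}\mid\sigma_{t-T}(\Sigma_0)]\,d\mu.
\end{align*}
The K-system axiom $\bigcap_t\sigma_t(\Sigma_0)=\{\emptyset,X\}$ combined with the reverse martingale convergence theorem yields $E[\mathbf{1}_{A_0}\mid\sigma_{t-T}(\Sigma_0)]\to\mu(A_0)$ in $L^1$ as $t\to\infty$, so $\sup_{B\in\Sigma_{t,r}}|\mu(B\cap A_0)-\mu(B)\mu(A_0)|\to 0$ once the approximation error is sent to zero.

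For ($\Leftarrow$), first note that K-mixing immediately implies strong mixing (set $r=1$, $A_1=X$) and hence ergodicity. The substantive step is showing triviality of the Pinsker $\sigma$-algebra $\Pi$ of the flow. Since $\Pi$ can be written as the join, over all finite measurable partitions $P$, of the tail $\sigma$-algebras $\bigcap_t\sigma_t\bigl(\sigma(\sigma_s(A):A\in P,\,s\geq 0)\bigr)$, applying K-mixing with $A_1,\ldots,A_r$ the atoms of $P$ and $A_0\in\Pi$ arbitrary forces $\mu(A_0)=\mu(A_0)^2$, so $\mu(A_0)\in\{0,1\}$. With $\Pi=\{\emptyset,X\}$, the Rokhlin--Sinai theorem supplies a filtration $\Sigma_0$ with $\sigma_t(\Sigma_0)\subset\Sigma_0$ for $t>0$, $\bigvee_t\sigma_t(\Sigma_0)=\Sigma$, and $\bigcap_t\sigma_t(\Sigma_0)=\Pi$, realizing the three K-system axioms. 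The hard part is precisely this triviality argument: K-mixing as stated uses only finitely many generators with $r$ fixed as $t\to\infty$, whereas a generic tail element could seem to require increasingly refined partitions. The Pinsker algebra is the right intermediate object because it is defined as the join of tails of all finite partitions, matching the finitary character of the K-mixing hypothesis, and Rokhlin--Sinai then avoids any direct construction of $\Sigma_0$ from a countable generating family.
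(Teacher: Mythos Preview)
The paper does not prove this proposition; it is stated with a citation to Cornfeld--Fomin--Sinai and no argument is given, so there is no ``paper's own proof'' to compare against. Your outline is in the spirit of the standard textbook treatment (reverse martingales for one direction, Pinsker plus Rokhlin--Sinai for the other), and the $(\Leftarrow)$ half is essentially correct: if $A_0\in\Pi$ then already $A_0$ lies in the tail of the two-set partition $\{A_0,A_0^c\}$, so taking $B=A_0$ in the K-mixing hypothesis forces $\mu(A_0)=\mu(A_0)^2$, and Rokhlin--Sinai then supplies the filtration.

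There is, however, a genuine gap in your $(\Rightarrow)$ argument. The assertion that ``every $B\in\Sigma_{t,r}$ lies, up to an error of order $r\epsilon$, in $\sigma_{t-T}(\Sigma_0)$'' is false as stated. The $\sigma$-algebra $\Sigma_{t,r}$ is generated by the \emph{uncountable} family $\{\sigma_s(A_i):s\ge t,\,1\le i\le r\}$, and a typical $B$ is built from arbitrarily many of these generators. Approximating each $A_i$ to within $\epsilon$ controls the error on any single $\sigma_s(A_i)$, but a set like $\bigcap_{k=1}^n\sigma_{s_k}(A_1)$ incurs an error of order $n\epsilon$, with $n$ unbounded; the bound you want depends on the Boolean complexity of $B$, not on $r$. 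Consequently the reduction to $\sigma_{t-T}(\Sigma_0)$ does not go through, and you cannot invoke reverse martingale convergence along the K-filtration in this way. The clean fix is to run reverse martingale convergence directly on the decreasing family $\{\Sigma_{t,r}\}_t$ itself, which reduces the problem to showing that the tail $\bigcap_t\Sigma_{t,r}$ is trivial; but that triviality statement (for \emph{every} finite collection $A_1,\dots,A_r$) is exactly the assertion that the Pinsker algebra is trivial, and establishing it from the bare K-filtration axioms again requires the Rokhlin--Sinai machinery you invoked for the converse. In other words, both directions ultimately route through Pinsker, and your attempted shortcut for $(\Rightarrow)$ does not avoid it.
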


The notion of K-system is, in a lot of ways, the most interesting one of the whole hierarchy, and we will see that it is also the one whose quantum counterpart has the most striking physical interpretation. That is the reason why it is very important to come up with characterizations of the K-property that are more quantitative than its definition, so that they can actually be applied to concrete examples. In the classical case, this is achieved by constructing an invariant of classical dynamical systems, called Kolmogorov--Sinai (KS) entropy \cite{Sinai_2010}. The precise result is that a Lebesgue dynamical system has completely positive KS entropy if and only if it possesses has a K-structure \cite{Cornfeld1982ErgodicT}.

\subsection{Bernoulli systems}

The notion of Bernoulli shift sits at the very top of the ergodic hierarchy. The idea of such a system is that at each step, it completely forgets the previous one. A typical example of such a system is a repeated coin toss. More formally:

\begin{defn}
Let $(X,\mathcal{A},\mu)$ be a probability space. Then a Bernoulli scheme is $(X,\mathcal{A},\mu)^\mathbb{Z}$. A Bernoulli scheme equipped with its shift automorphism is called a Bernoulli dynamical system.
\end{defn}

The Bernoulli property is the strongest property of chaos one can ask for in a dynamical system, however, only few systems possess it. It will not be directly relevant in our case.

\subsection{Anosov systems}

Although not part of the canonical ergodic hierarchy, the notion of Anosov system is very important in the study of quantum dynamical systems. The idea of an Anosov flow requires a bit more structure, and it can be seen as a useful special case that often displays a lot of the properties described in this note. In the case of an Anosov system, the underlying measured space is actually a manifold (at least in the most basic case). The idea is then to split the tangent bundle of this manifold into contracting and expanding directions under the flow under consideration. More formally \cite{e345abdc-cb8a-3542-b921-1c0db6927d2b}:

\begin{defn}
Let $M$ be a smooth compact connected oriented manifold. Let $\phi_t$ be a nonsingular flow of class $C^r$ on $M$. $\phi_t$ is an Anosov flow of class $C^r$ if there exists a $\phi_t$-invariant continuous splitting of the tangent bundle of $M$, given by $TM=E^u\oplus E^s\oplus E^T$, where $E^T$ is the line bundle tangent to $\phi_t$ and $E^u$ and $E^s$ satisfy:
\begin{itemize}
\item There exist constants $A>0$, $\mu>1$ such that for $t\in\mathbb{R}$ and $v\in E^u$:\begin{align}\|\phi_{t\ast}(v)\|\geq A\mu^t\|v\|,\end{align}
\item There exist constants $B>0$, $\lambda<1$ such that for $t\in\mathbb{R}$ and $v\in E^s$:\begin{align}\|\phi_{t\ast}(v)\|\leq B\lambda^t\|v\|.\end{align}
\end{itemize}
\end{defn}

The splitting elements $E^u$ and $E^s$ can be interpreted as expanding and contracting directions for the flow $\phi_t$. The presence of these expansions and contractions is related to chaotic properties of the flow. This can be seen very explicitly in the case of geodesic flow on a Riemann surface, as it will be explained in Section \ref{sec:geodesic}.

Under reasonable conditions, the Anosov property implies the K-property. In particular,

\begin{thm}[\cite{BURNS2000149}]
Any $C^2$ topologically mixing volume preserving Anosov flow is a K-system (and even Bernoulli).
\end{thm}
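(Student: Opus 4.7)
The plan is to proceed in three stages: ergodicity via a Hopf-type argument, the K-property via a Sinai-style decreasing $\sigma$-algebra built from the unstable foliation, and finally Bernoullicity via Ornstein theory and its adaptation to flows. The starting point is to integrate the invariant splitting $TM=E^u\oplus E^s\oplus E^T$ into the stable and unstable foliations $W^s$, $W^u$ using the stable manifold theorem; the hyperbolicity estimates in the definition of an Anosov flow are exactly what that theorem requires. The $C^2$ hypothesis enters decisively at the next step, because it ensures that the Anosov splitting is Hölder continuous, which is the minimal regularity needed for the Anosov--Sinai theorem on \emph{absolute continuity} of the holonomy maps between transversals of $W^s$ and $W^u$ with respect to the invariant volume $\mu$.

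With absolute continuity in hand, I would run the Hopf argument to get ergodicity. For any continuous observable $f$, the forward Birkhoff average $f^+$ is constant along stable leaves and the backward average $f^-$ is constant along unstable leaves. Birkhoff's theorem gives $f^+=f^-$ almost everywhere, so both are constant along leaves of both foliations. Because the flow is topologically mixing (in particular transitive) and the holonomies are absolutely continuous, this forces $f^\pm$ to be $\mu$-essentially constant, which yields ergodicity of every time-$t$ map.

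For the K-property, I would construct a measurable partition $\xi$ subordinate to $W^u$, that is, a partition each of whose atoms is an open piece of a single unstable leaf, and let $\Sigma_0$ be the $\sigma$-algebra it generates. Verifying the three K-axioms then becomes a translation of the Anosov dynamics: (i) $\sigma_t(\Sigma_0)\subset\Sigma_0$ for $t>0$ because $\phi_t$ expands $E^u$, enlarging the unstable atoms; (ii) $\bigvee_t \sigma_t(\Sigma_0)=\Sigma$, because the forward iterates of unstable plaques become dense by topological mixing and fill out $M$ modulo null sets thanks to absolute continuity; (iii) $\bigwedge_t \sigma_t(\Sigma_0)=\{\emptyset,M\}$, which is the heart of the matter and uses contraction along $E^s$ together with a standard argument showing that the Pinsker $\sigma$-algebra of an ergodic system with a nontrivial expanding foliation is trivial.

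The final upgrade to Bernoulli is the deepest step and, in my view, the main obstacle. The strategy is to apply Ornstein's isomorphism theorem, which says that a K-automorphism satisfying the ``very weak Bernoulli'' criterion is measurably isomorphic to a Bernoulli shift; for smooth hyperbolic systems, very weak Bernoulli is established by exploiting the local product structure $W^s\times W^u$ together with absolute continuity to compare conditional distributions on close unstable plaques. This handles the time-one map. Promoting this to Bernoullicity of the whole \emph{flow}, i.e.\ to each $\phi_t$ being Bernoulli for every $t\neq 0$, requires Ratner's theorem that a $C^2$ Anosov flow whose time-one map is Bernoulli is a Bernoulli flow. The technical heavy lifting lies in the very weak Bernoulli verification and in Ratner's flow argument; the rest of the proof is relatively standard for anyone fluent in smooth ergodic theory.
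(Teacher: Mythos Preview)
The paper does not give its own proof of this theorem: it is stated with a citation to \cite{BURNS2000149} and nothing more. In this note the result is quoted purely as background, to situate Anosov flows within the classical ergodic hierarchy before passing to the quantum analogs. So there is no argument in the paper to compare your proposal against.

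That said, your outline is a faithful sketch of the classical route (Hopf argument via absolute continuity of the stable and unstable foliations, a Sinai-type decreasing measurable partition subordinate to $W^u$ for the K-property, then Ornstein--Weiss very weak Bernoulli plus Ratner's flow result for Bernoullicity). One small point to be careful about: in your step (i) the monotonicity of $\sigma_t(\Sigma_0)$ does not follow just from ``$\phi_t$ enlarges unstable atoms''; one needs the partition $\xi$ to be \emph{increasing}, meaning the atoms of $\phi_t\xi$ are genuinely unions of atoms of $\xi$ modulo null sets, which requires a specific construction (local unstable leaves cut by a Markov or Pesin-type partition) rather than an arbitrary partition subordinate to $W^u$. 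With that caveat, the strategy is standard and correct.
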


It will turn out that in the case of quantum modular automorphisms, one obtains an even stronger link between the Anosov and K-properties, where to each K-system can be associated an Anosov flow.

\subsection{Example: Geodesic flow on a hyperbolic Riemann surface}
\label{sec:geodesic}
The prototypical example of an Anosov system is geodesic flow on a compact orientable Riemann surface of constant negative curvature. Any such surface can be written as a quotient $\Sigma=\Gamma\backslash\mathbb{H}$, where $\Gamma$ is a cocompact Fuchsian group. Consider $Q\cong PSL(2,\mathbb{R})$, the unit tangent bundle to the upper half-plane $\mathbb{H}$. The geodesic flow $\phi_t=\begin{pmatrix}e^{t/2}& 0 \\0 & e^{-t/2}\end{pmatrix}$, as well as the horocycle flows $h_t^\ast=\begin{pmatrix}1 & t \\0 & 1\end{pmatrix}$, $h_t=\begin{pmatrix}1 & 0 \\t & 1\end{pmatrix}$ define invariant flows on $Q$, which descend to invariant flows on $P$, the unit tangent bundle of $\Sigma$. Then one can decompose the tangent space $TP=E^+\oplus E^0\oplus E^-$, where $E^+$ and $E^-$ are the expanding and contracting directions along the horocycle flows, and $E^0$ is the geodesic flow direction. One can show that this decomposition satisfies the axioms of an Anosov system.

An important fact about this example is that the geodesic and horocycle flows satisfy the commutation relations
\begin{align}
g_th_s^\ast=h_{se^{-t}}^\ast g_t,\quad\quad\quad g_th_s=h_{se^t}g_t.
\label{eq:anosov}
\end{align}
These commutation relations will be the starting point of the definition of quantum Anosov systems. Note that they are the same as the ones between boosts and null translations in Rindler space, as will be explored in more detail in the next section.

\section{Quantum chaotic dynamics: von Neumann algebras}

One would like to define a similar hierarchy of chaos for the case of quantum systems. Of course, this implies resorting to a quantum notion of dynamical system on which measure theory can be used. The language of von Neumann algebras shows up naturally when one attempts to formulate a quantum analog of the notion of classical dynamical system. The reason is the following.

The space of configurations of a quantum system is of noncommutative nature. Therefore it is impossible to think of it as a classical topological manifold or as a measured space. However, there is a dual viewpoint on these spaces of configurations that generalizes well to the quantum case. The Gelfand--Naimark theorem \cite{Gelfand:1943} and its variants tell us that it is possible to think of various classes of classical spaces in terms of the algebras of functions they support. Points are then identified with ideals of functions that vanish on them. In the case of a topological space $X$, the relevant algebra is the algebra of continuous functions $C^0(X)$. In the case of a measured space, the relevant algebra is the algebra of bounded measurable functions $L^\infty(X,\mu)$.
It turns out that $C^0(X)$ is a commutative $C^\ast$-algebra, while $L^\infty(X,\mu)$ is a commutative von Neumann algebra.

Now focusing on the measured case, one needs to find a quantum analog of the notion of measure. Once again, since there no longer are any ``points" on a noncommutative configuration space, a measure itself cannot be defined. However, a similar notion still makes sense: that of taking the expectation value of a function against a probability measure thanks to the formula
\begin{align}
\braket{f}_\nu=\int_X f d\nu.
\end{align}

If $\nu$ is a probability measure, this is a linear functional of norm one on the space of functions. Thus it is natural to encode the notion of probability measure into that of expectation value functional in the noncommutative case - and an expectation value functional is nothing else than a quantum state. Hence, the quantum counterpart of the measure in a classical dynamical system is simply a quantum state.

In fact, one needs to be a bit more precise. In the classical case, if $f\in L^\infty(X,\mu)$ and $\nu$ is a measure on $X$, then $\braket{f}_\nu$ has good continuity properties with respect to $f$ if and only if $\nu$ is absolutely continuous with respect to $\mu$ (which is denoted by $\nu\ll\mu$). The quantum analog of this statement is that one should require the quantum state under consideration to be \textit{normal} on the von Neumann algebra.

Finally dynamics is trivially generalized as a (strongly continuous) one-parameter group of automorphisms of the von Neumann algebra $M$. This leads to the definition of quantum dynamical systems that will be used in this note:

\begin{defn}
A quantum dynamical system is a triple $(M,\omega,\tau_t)$, where $M$ is a von Neumann algebra, $\omega$ is a normal state on $M$, and $\tau_t$ is a strongly continuous one-parameter group of automorphisms of $M$.
\end{defn}
It is common to assume that $\omega\circ\tau_t=\omega$.

In the quantum case, a new situation of interest appears that used to be trivial in the commutative case: if $\omega$ is also faithful then $\tau_t$ can be the modular automorphism group of $\omega$. This is of particular interest given the central role played by modular flow in the emergence of spacetime. It is then useful to define:

\begin{defn}
If the quantum dynamical system $(M,\omega,\tau_t)$, with $\omega$ faithful, is defined so that $\tau_t$ is the modular automorphism group of $\omega$, then $(M,\omega,\tau_t)$ is said to be a modular quantum dynamical system.
\end{defn}

One can sum up the relationship between von Neumann algebras and measured spaces as indicated in Table \ref{tab:classquant}.

\begin{table}[!ht]
    \centering
    \begingroup
    \renewcommand{\arraystretch}{2.2}
    \scalebox{.8}{\begin{tabular}{cc}
    \toprule
\textbf{Classical case} & \textbf{Quantum case} \\
\midrule
$C^0(X)$ & $C^\ast$-algebra\\
\cmidrule(lr){1-2}
$L^\infty(X,\mu)$ & von Neumann algebra\\
\cmidrule(lr){1-2}
Measure $\nu\ll\mu$ & Normal state $\omega$ \\
\cmidrule(lr){1-2}
$\sigma$-algebra automorphism $\sigma_t$ & von Neumann algebra automorphism $\tau_t$\\
\cmidrule(lr){1-2}
Dynamical system $(X,\mu,\sigma_t)$ & Dynamical system $(M,\omega,\tau_t)$\\
\bottomrule
\end{tabular}}\endgroup
\caption{The parallel between classical and quantum dynamical systems.}
    \label{tab:classquant}
\end{table}

\subsection{Ergodicity and mixing}

The generalization of ergodicity and mixing to the quantum case is completely straightforward now that we have a good definition of a quantum dynamical system. We simply define:

\begin{defn}
The quantum dynamical system $(M,\omega,\tau_t)$ is:
\begin{itemize}
\item Ergodic if for all $A,B\in M$, 
\begin{align}
\underset{T\rightarrow\infty}{\mathrm{lim}}\frac{1}{2T}\int_{-T}^T\omega(\tau_t(B) A)=\omega(B)\omega(A),
\end{align}
\item Weakly mixing if for all $A,B\in M$, 
\begin{align}
\underset{T\rightarrow\infty}{\mathrm{lim}}\frac{1}{2T}\int_{-T}^T\lvert\omega(\tau_t(B) A)-\omega(B)\omega(A)\rvert dt=0,
\end{align}
\item Strongly mixing if for all $A,B\in M$, 
\begin{align}
\underset{t\rightarrow\infty}{\mathrm{lim}}\omega(\tau_t(B) A)=\omega(B)\omega(A).
\end{align}
\end{itemize}
\end{defn}

\subsection{Quantum Kolmogorov systems}

It is also quite straightforward to transfer the notion of K-system to the quantum case, and it stays close to the mixing properties.

\begin{defn}[\cite{cmp/1104179628}]
    Let $(M,\tau_t)$ be a quantum dynamical system. $M$ is \textit{refining} if there exists a von Neumann subalgebra $N\subsetneq M$ such that for $t>0$, $\sigma_t(N)\subset N$. It is a \textit{K-system} if in addition,
    $\underset{t\in\mathbb{R}}{\bigvee} \tau_t(N)=M$, and $\underset{t\in\mathbb{R}}{\bigcap} \tau_t(N)=\mathbb{C}$.    
\end{defn}

It is natural to ask whether just like KS entropy in the classical case, it is possible to find an invariant giving a characterization of K-flows in the quantum case. Unlike in the classical case, such an invariant has so far remained elusive. An important attempt to define a quantum analog of KS entropy has been made by Connes, Narnhofer and Thirring (CNT) \cite{cmp/1104160061}, however, it turns out that CNT entropy is not enough to characterize the K-property. Instead, an alternative notion of ``entropic K-system" (see for example \cite{Benatti1993DeterministicCI}) can be defined in analogy with the classical case, but it has not been shown to be equivalent to the algebraic definition. It is a very interesting question whether such difficulties could be overcome.

\subsection{Quantum Anosov systems}

There is also a good notion of Anosov flow for quantum systems, although one needs to relax it appropriately to allow for the contracting and expanding transformations to be non-invertible (only endomorphisms rather than automorphisms).

\begin{defn}[\cite{NARNHOFER_THIRRING_2000}]
Let $(M,\tau_t)$ be a quantum dynamical system, $\omega$ be an invariant faithful normal state on $M$, and $\mathcal{H}_\omega$ be its GNS representation. The system $(M,\tau_t)$ is \textit{Anosov} if there exists a strongly continuous one-parameter group of automorphisms $\sigma_s\in\mathrm{Aut}\mathcal{B}(\mathcal{H}_\omega)$ such that for $s>0$, $\sigma_s(M)\subsetneq M$, and for all $s,t\in\mathbb{R}$,
\begin{align}
\label{eq:rindlercomm}
\tau_t\circ\sigma_s=\sigma_{se^{-t}}\circ\tau_t.
\end{align}
\end{defn}

In the case of a modular quantum dynamical system, the Anosov and K-properties are very closely related:

\begin{prop}
Let $(M,\tau_t,\omega)$ be a modular quantum dynamical system. Then $(M,\tau_t,\omega)$ is refining if and only if it is Anosov.
\end{prop}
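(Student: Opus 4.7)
The plan is to recognize this proposition as a direct dictionary entry between the Borchers--Wiesbrock theorem on half-sided modular inclusions and the dynamical-systems terminology of this note. When $\tau_t$ is the modular flow of $\omega$, a proper refining subalgebra $N \subsetneq M$ with the one-sided invariance property is exactly the data of a half-sided modular inclusion $(N,M,\Omega_\omega)$, and the Borchers translations associated to such an inclusion are precisely the one-parameter group $\sigma_s$ of the Anosov definition (up to a factor of $2\pi$ and a sign convention between the two definitions given in the text).

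For the Anosov $\Rightarrow$ refining direction, I would fix some $s_0$ of the sign prescribed by the Anosov axiom and set $N := \sigma_{s_0}(M)$, which is a proper von Neumann subalgebra of $M$ by hypothesis. Using \eqref{eq:rindlercomm} together with the invariance $\tau_t(M)=M$ of the modular flow, one obtains
\begin{align*}
\tau_t(N) \;=\; \tau_t \sigma_{s_0}(M) \;=\; \sigma_{s_0 e^{-t}}\tau_t(M) \;=\; \sigma_{s_0 e^{-t}}(M).
\end{align*}
The semigroup property of $\{\sigma_s\}$ on the half-line together with $\sigma_s(M) \subset M$ for $s>0$ then forces $\tau_t(N) \subset N$ for the appropriate sign of $t$, establishing refining.

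For the refining $\Rightarrow$ Anosov direction, the essential input is the Borchers--Wiesbrock theorem. Given $N \subsetneq M$ with $\tau_t$ the modular flow of $\omega$ and $\tau_t(N)\subset N$ on a half-line, the triple $(N,M,\Omega_\omega)$, with $\Omega_\omega$ the GNS vector of $\omega$, is a half-sided modular inclusion. Borchers--Wiesbrock then produces a unique strongly continuous unitary one-parameter group $U(s) = e^{isP}$ on $\mathcal{H}_\omega$ with $P \geq 0$, fixing $\Omega_\omega$, satisfying $U(s)MU(s)^\ast \subset M$ for $s>0$, recovering $N$ at $s=1$ (after rescaling), and obeying the Borchers commutation relation
\begin{align*}
\Delta_\omega^{it}\,U(s)\,\Delta_\omega^{-it} \;=\; U(e^{-2\pi t}\, s).
\end{align*}
Setting $\sigma_s := \mathrm{Ad}\,U(s)$ yields a strongly continuous one-parameter group of automorphisms of $\mathcal{B}(\mathcal{H}_\omega)$ which restricts to a semigroup of endomorphisms of $M$ and, after absorbing the $2\pi$ into the parametrization of modular time, satisfies \eqref{eq:rindlercomm}.

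The hard part is verifying the standardness hypothesis of the Borchers--Wiesbrock theorem from the bare definition of a modular dynamical system. The vector $\Omega_\omega$ is automatically cyclic and separating for $M$ and separating for $N$, but cyclicity of $\Omega_\omega$ for $N$ does not follow from the axioms stated above. In practice one either supplements the definition with a standardness assumption on the inclusion, or derives cyclicity from auxiliary structure -- for instance from the K-system condition $\bigvee_t \tau_t(N) = M$ combined with separating-ness, or from Reeh--Schlieder in the algebraic QFT examples motivating the holographic applications. Once this technicality is handled, the proposition is a compact repackaging of a classical result in Tomita--Takesaki theory.
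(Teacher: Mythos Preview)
Your proposal is correct and, at the level of mathematical content, equivalent to the paper's argument: both directions ultimately rest on the half-sided modular inclusion machinery. The Anosov $\Rightarrow$ refining direction is identical (the paper just writes $N:=\sigma_1(M)$ without spelling out the verification you give). For refining $\Rightarrow$ Anosov, the paper follows Narnhofer--Thirring rather than invoking Borchers--Wiesbrock as a black box: it writes down the two modular Hamiltonians $H_M=\tfrac{1}{2\pi}\ln\Delta_M$ and $H_N=\tfrac{1}{2\pi}\ln\Delta_N$, observes $\Delta_M\leq\Delta_N$, and \emph{constructs} the Anosov generator explicitly as $G:=H_M-H_N\geq 0$, deriving the commutation relation $[H_M,G]=iG$ (via analyticity of $W(t):=\Delta_M^{it}\Delta_N^{-it}$ on the strip and the identity $J_MW(s)J_N=W(i/2+s)$) and then exponentiating. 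So what you obtain by citing a theorem, the paper obtains by reproducing its proof; the payoff of the paper's route is the concrete identification of $\sigma_s$ as generated by the difference of the two modular Hamiltonians, which is the ``modular scrambling mode'' picture invoked later.

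Your caveat about cyclicity of $\Omega_\omega$ for $N$ is well taken and applies equally to the paper's proof: writing $\Delta_N$ already presupposes that $\Omega_\omega$ is standard for $N$, which, as you note, does not follow from the bare refining axiom and is being tacitly assumed.
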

\begin{proof}
The proof closely follows \cite{NARNHOFER_THIRRING_2000}. For refining $\Rightarrow$ Anosov, let $M$ and $N$ be like in the definition. Let $\Delta_M$ and $\Delta_N$ implement the modular automorphisms of the restriction of $\omega$ to $M$ and $N$ respectively. As quadratic forms, $\Delta_M\leq\Delta_N$. Moreover, by the refining property, 
\begin{align}
\Delta_M^{it}\Delta_N^{-it}N\Delta_N^{it}\Delta_M^{-it}\subset N.
\end{align}
Let $W(t):=\Delta_M^{it}\Delta_N^{-it}$. $W(t)$ is an analytic function of $t$ on the strip $0<\mathrm{Im}\,t<1/2$. It is also strongly continuous at the boundary and unitary on the boundary, and we have the identity
\begin{align}
J_MW(s)J_N=W(i/2+s).
\end{align}
Let $H_{N,M}:=\frac{1}{2\pi}\mathrm{ln}\Delta_{N,M}$, and $G:=H_M-H_N\geq 0$.
The commutation relations are 
\begin{align}
[H_M,H_N]=i(H_M-H_N),
\end{align}
and
\begin{align}
[H_M,G]=iG.
\end{align}
Exponentiating these relations gives rise to the defining relation of a quantum Anosov system, where $H_M$ generates $\tau_t$ and $G$ generates $\sigma_s$. 

For Anosov $\Rightarrow$ refining, the algebra $N:=\sigma_1(M)$ satisfies the assumptions.
\end{proof}

Much more can be proven about modular Anosov/K-systems. Specifically, in \cite{NARNHOFER_THIRRING_2000} it is shown that under an extra assumption, any refining (or Anosov) system contains a K-subsystem. One can also show that the K-property and Anosov properties imply very strong mixing properties. For example, in \cite{doi:10.1142/S0219025701000401} it was shown that correlators decay exponentially for a dense set of observables of a quantum Anosov system, which is once again the expected behavior of a horizon in a quantum field theory \cite{Festuccia:2005pi,Festuccia:2006sa}, see also \cite{Narnhofer2007}. 

Finally note that in analogy to the classical case, there are some quantum K-systems that are not Bernoulli \cite{Golodets_Neshveyev_1998}. 

\subsection{Example: Arnold's cat and its quantization}

Another interesting example in this context is Arnold's cat map \cite{Emch:1994jt}, as it has both a classical and a quantum version. The principal difference from the other examples presented here is that this system has a discrete time parameter. In the classical case, the Arnold cat map is constructed in the following way:

\begin{defn}
Let $M:=\mathbb{R}^2/\mathbb{Z}^2$, and let $T\in SL(2,\mathbb{Z})$, with eigenvalues $\varepsilon_1$ and $\varepsilon_2$, and trace $>2$. Let $0<\varepsilon_1<1<\varepsilon_2$ be the eigenvalues of $T$, and $X_1$, $X_2$ be corresponding normalized eigenvectors.
\end{defn}

A standard result is:

\begin{thm}
The classical Arnold cat map defines an Anosov system.
\end{thm}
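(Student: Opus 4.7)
The plan is to verify the Anosov axioms directly for the Arnold cat map, after adapting the definition from the excerpt (which is stated for flows) to the discrete-time setting. Per the footnote accompanying the definition of a dynamical system, this adaptation is straightforward: for a single diffeomorphism $T$, one drops the $E^T$ direction tangent to the flow and requires an invariant splitting $TM = E^u \oplus E^s$ together with uniform exponential expansion of $DT^n$ on $E^u$ and contraction on $E^s$ as $n \to +\infty$.

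The key observation is that the tangent bundle of $M = \mathbb{R}^2/\mathbb{Z}^2$ is canonically trivial, $TM \cong M \times \mathbb{R}^2$, and the map $T$, being induced by a linear automorphism of the universal cover, has derivative equal to $T$ itself in every fiber of this trivialization. The construction of the splitting is then purely algebraic: I would set $E^s_p := \mathbb{R} X_1$ and $E^u_p := \mathbb{R} X_2$ at every point $p \in M$, using the eigenvectors provided in the definition. Because $T \in SL(2,\mathbb{Z})$ with trace greater than $2$, its two eigenvalues are real, positive, reciprocal, and strictly hyperbolic, so $X_1$ and $X_2$ span transverse real lines, and the resulting splitting is continuous in $p$ (indeed constant in the global trivialization) and manifestly $T$-invariant.

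Next I would verify the expansion and contraction bounds. For $v = c X_2 \in E^u$, iteration gives $T^n v = c\,\varepsilon_2^n X_2$, hence $\|T^n v\| = \varepsilon_2^n \|v\|$ with $\varepsilon_2 > 1$; taking $A := 1$ and $\mu := \varepsilon_2$ satisfies the unstable bound. The stable bound is analogous: for $v \in E^s$ one has $\|T^n v\| = \varepsilon_1^n \|v\|$ with $\varepsilon_1 < 1$, and we set $B := 1$, $\lambda := \varepsilon_1$. Both inequalities are actually equalities, so the bounds hold uniformly in $p \in M$.

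There is no serious obstacle here: the whole statement reduces to the spectral properties of a hyperbolic $2 \times 2$ integer matrix combined with the triviality of $TM$. The only step requiring a moment's care is the transcription of the Anosov definition to discrete time (omitting $E^T$, replacing $\phi_{t\ast}$ by $DT^n$, and reading the time parameter $t$ as $n \in \mathbb{Z}$); once this is made explicit, the verification is immediate and the proof is essentially a one-line spectral computation.
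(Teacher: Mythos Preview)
Your proposal is correct and is exactly the standard argument. The paper itself does not supply a proof of this theorem at all: it is introduced with the phrase ``A standard result is'' and then stated without further justification before moving on to the quantized version. So there is nothing to compare against beyond noting that your direct spectral verification---using the triviality of the tangent bundle of the torus, the constancy of $DT$, and the hyperbolic eigenvalue pair $\varepsilon_1<1<\varepsilon_2$ to build the invariant splitting $E^s\oplus E^u$ with explicit rates $\lambda=\varepsilon_1$, $\mu=\varepsilon_2$---is precisely the textbook proof the paper is implicitly invoking. Your remark about dropping $E^T$ in the discrete-time setting is also the correct adaptation.
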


Now, one can quantize the cat map through usual Weyl quantization. Define the $C^\ast$-algebra $\mathcal{A}$ as the norm completion of the algebra generated by the normalized operators $W(x)$ for $x\in\mathbb{R}^2$, satisfying relations
\begin{align}
W(x)^\dagger=W(-x),
\end{align}
and
\begin{align}
W(x_1)W(x_2)=e^{i\pi \theta \mathrm{det}(x_1,x_2)}W(x_1+x_2),
\end{align}
restricted to the $x\in\mathbb{R}^2$ such that 
$W(\eta)W(x)W(\eta)^\dagger=W(x)$ for all $\eta\in(1/\theta)\mathbb{Z}$.
The state $\omega(W(\xi)):=\delta_{\xi,0}$ defines a faithful tracial state on $\mathcal{A}$. Let $M$ be the bicommutant of $\mathcal{A}$ in the GNS representation of $\omega$. The action $\tau:W(x)\mapsto W(^tTx)$ extends to a continuous action on $M$.

Defining the flow
\begin{align}
\sigma_t^{12}(W(x)):= e^{-it\braket{x,X_{12}}},
\end{align}
one can check that this extends to a weakly continuous one-parameter group of automorphisms on $M$, and that the quantum Anosov relation
\begin{align}
\tau^n\sigma_t^{12}\tau^{-n}=\sigma^{12}_{e^{-\lambda_{12}n}t}
\end{align}
holds. 

We then obtain:

\begin{thm}[\cite{Emch:1994jt}]
The quantized Arnold cat map defines a quantum Anosov system with discrete time evolution.
\end{thm}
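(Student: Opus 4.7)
The plan is to unpack the definition of a quantum Anosov system with time evolution generated by powers of $\tau$ rather than by a continuous flow, and to verify each required property directly from the Weyl form of the canonical commutation relations. I read the displayed formula for $\sigma_t^{12}$ as $\sigma_t^{i}(W(x)) := e^{-it\langle x,X_i\rangle}\,W(x)$ for $i\in\{1,2\}$, which is the only natural reading consistent with $\sigma_t^i$ being a $\ast$-homomorphism.

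First I would verify that $\tau$ extends to a $\ast$-automorphism of $M$. Because $T\in SL(2,\mathbb{Z})$, the linear map $x\mapsto {}^tTx$ preserves the determinant and the lattice $(1/\theta)\mathbb{Z}^2$, so it preserves both the Weyl phase $e^{i\pi\theta\det(x_1,x_2)}$ and the constraint $W(\eta)W(x)W(\eta)^\dagger=W(x)$. The tracial state $\omega(W(\xi))=\delta_{\xi,0}$ is $\tau$-invariant by a one-line check, so $\tau$ lifts through the GNS construction to $M$. Next, $\sigma_t^i$ is a weakly continuous one-parameter group of automorphisms: the group law $\sigma_s^i\circ\sigma_t^i=\sigma_{s+t}^i$ is immediate from additivity of the phase, while multiplicativity follows because the extra factor $e^{-it\langle x_1+x_2,X_i\rangle}$ appears on both sides of the Weyl relation and so cancels.

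The discrete Anosov identity is then a direct calculation. Writing $TX_i=\varepsilon_i X_i$ with $\varepsilon_i=e^{\lambda_i}$, one obtains
\begin{align*}
\tau^n\sigma_t^i\tau^{-n}(W(x))
&= e^{-it\langle({}^tT)^{-n}x,\,X_i\rangle}\,W(x)\\
&= e^{-it\varepsilon_i^{-n}\langle x,X_i\rangle}\,W(x)
 = \sigma_{e^{-\lambda_i n}t}^i(W(x)),
\end{align*}
using $\langle({}^tT)^{-n}x,X_i\rangle = \langle x,T^{-n}X_i\rangle = \varepsilon_i^{-n}\langle x,X_i\rangle$. This extends from generators to all of $M$ by linearity and weak continuity, reproducing the displayed commutation relation.

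The main obstacle I anticipate is the refinement condition $\sigma_s^i(M)\subsetneq M$ for $s>0$ required by the definition. With the formula above, $\sigma_s^i$ merely twists each generator by a phase and so is manifestly bijective on $M$; strict containment must therefore be enforced by restricting to a subalgebra singled out by the expanding direction $X_2$ (or, dually, the contracting direction $X_1$), in close analogy with the stable/unstable horospheres of the classical geodesic flow of Section \ref{sec:geodesic}. Identifying the correct subalgebra and checking that $\tau$ together with $\sigma^i$ act on it coherently is the step that demands the most care, and is where I would follow Emch's original construction most closely.
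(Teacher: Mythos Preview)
Your approach coincides with the paper's treatment: both set up the Weyl algebra, read $\sigma_t^i(W(x))=e^{-it\langle x,X_i\rangle}W(x)$ (your interpretation of the displayed formula is the intended one), and verify the discrete Anosov commutation relation on generators. The paper offers no formal proof beyond this construction; it simply asserts that ``one can check'' the relation holds and then states the theorem as a citation to Emch.

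Your concern about the strict inclusion $\sigma_s^i(M)\subsetneq M$ is well founded. A phase twist is visibly surjective on $M$, so the refinement clause in the paper's definition of a quantum Anosov system is not satisfied by the construction as literally presented here; the paper does not address this point either and defers to \cite{Emch:1994jt}. Your instinct to return to Emch's original argument for this step is therefore exactly right, and this is the only substantive gap separating either sketch from a complete verification.
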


Note that here the dynamics is discrete rather than continuous, nevertheless, the cat map is an important example of quantum Anosov system.

\subsection{Analogy with Rindler space}

As it was hinted above, Rindler space is the stereotypical example of a modular quantum K-system with continuous evolution. By the Bisognano--Wichmann theorem \cite{Bisognano:1976za}, modular flow of the algebra of observables in the Rindler wedge coincides with boosts along the Rindler horizon. If $\tau_t$ denotes modular flow and $\sigma^\pm_s$ denote the two null translations along the horizon, they then satisfy the relation 
\begin{align} \tau_t\circ \sigma^\pm_s\circ \tau_{-t}=\sigma^\pm_{se^{\pm t}},
\label{eq:rindler}\end{align} 
which matches with the Anosov relation \eqref{eq:rindlercomm}, and is part of the relations corresponding to the local Poincaré symmetry.

Moreover, one can note that denoting by $\mathcal{M}$ the algebra of the Rindler wedge, and by $\mathcal{N^\pm}:=\sigma^\pm_1(\mathcal{M})$, $\underset{t\in\mathbb{R}}{\bigcap} \tau_t(N)=\mathbb{C}$ \cite{lechner2022deformations}. In other words:

\begin{prop}[\cite{Emch:1994jt}]
The Rindler wedge is a quantum K-system with continuous dynamics. 
\end{prop}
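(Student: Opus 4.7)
The plan is to take as candidate refining subalgebra $\mathcal{N}:=\sigma^+_1(\mathcal{M})$, obtained by translating the Rindler wedge algebra one unit along the $+$ null direction, and then verify each of the three K-system axioms using the Rindler Anosov relation \eqref{eq:rindler}. By Bisognano--Wichmann, modular flow on $\mathcal{M}$ coincides with the Rindler boost, so the data $(\mathcal{M},\tau_t)$ equipped with $\sigma^+_s$ form a modular Anosov system, and refining ($\tau_t(\mathcal{N})\subset\mathcal{N}$ for $t>0$) follows immediately from the proposition already established relating the Anosov and refining properties.

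For explicit control, using the invariance $\tau_t(\mathcal{M})=\mathcal{M}$ of the wedge algebra under its modular flow, one computes
\begin{align*}
\tau_t(\mathcal{N})=\tau_t\circ\sigma^+_1(\mathcal{M})=\sigma^+_{e^t}\circ\tau_t(\mathcal{M})=\sigma^+_{e^t}(\mathcal{M}),
\end{align*}
so the orbit $\{\tau_t(\mathcal{N})\}_{t\in\mathbb{R}}$ is nothing but the continuous family of null-translated wedges $\{\sigma^+_s(\mathcal{M})\}_{s>0}$, a nested family monotonically decreasing in $s$. This parametrization makes the refining property manifest and reduces the remaining axioms to statements about the behavior of this family at $s\to 0$ and $s\to\infty$.

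The generation axiom $\bigvee_{t\in\mathbb{R}}\tau_t(\mathcal{N})=\mathcal{M}$ then follows by sending $t\to-\infty$, so that $s=e^t\to 0^+$: strong continuity of the null translation together with $\sigma^+_0=\mathrm{id}$ forces the upward limit of the increasing family $\sigma^+_s(\mathcal{M})$ to fill out $\mathcal{M}$.

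The main obstacle is the trivial-intersection axiom $\bigcap_{t\in\mathbb{R}}\tau_t(\mathcal{N})=\mathbb{C}$, equivalent to $\bigcap_{s>0}\sigma^+_s(\mathcal{M})=\mathbb{C}$: the wedge algebras pushed arbitrarily far along the null direction must share only scalars. I would invoke this as the result of \cite{lechner2022deformations} rather than reprove it. Geometrically it reflects the fact that the Rindler horizon is a codimension-one null hypersurface supporting no nontrivial local algebra, but the operator-algebraic statement genuinely requires the machinery of half-sided modular inclusions and the type III$_1$ structure of the wedge algebra, which is where the depth of the result lies.
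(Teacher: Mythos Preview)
Your proposal is correct and follows essentially the same route the paper sketches: the paper also takes $\mathcal{N}^\pm:=\sigma^\pm_1(\mathcal{M})$, invokes Bisognano--Wichmann together with the Rindler/Anosov relation \eqref{eq:rindler}, and cites \cite{lechner2022deformations} for the trivial-intersection axiom, stating the result with reference to \cite{Emch:1994jt} rather than giving a self-contained proof. Your write-up simply makes explicit the computation $\tau_t(\mathcal{N})=\sigma^+_{e^t}(\mathcal{M})$ and the $t\to-\infty$ argument for the generation axiom, which the paper leaves implicit.
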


What we learn from this example is that the structure of half-sided modular inclusion exactly fits into the definition of a modular K-system! Therefore there is an interpretation of the nesting of algebras along the horizon in terms of chaos: as one moves along the horizon there are less and less observables until there are none, which is a signature of complete memory loss. It is also interesting to note that the generators of null translations $G^\pm$ satisfy the commutation relations \begin{align}[H,G^\pm]=\pm i G^\pm,\end{align}
which were already related to some notion of maximal chaos in \cite{DeBoer:2019kdj}, where the $G^\pm$ referred to as ``modular scrambling modes".

Table \ref{tab:mostsymmetric} compares the properties of the Rindler wedge to those of geodesic flow on Riemann surfaces.

\begin{table}[!ht]
    \centering
    \begingroup
    \renewcommand{\arraystretch}{2.2}
    \scalebox{.8}{\begin{tabular}{cc}
    \toprule
\textbf{Geodesic flow on hyperbolic surfaces} & \textbf{QFT in Rindler space} \\
\midrule
Geodesic flow & Modular flow/boost\\
\cmidrule(lr){1-2}
Horocycle flow & Null translation\\
\cmidrule(lr){1-2}
Refining property & Half-sided modular inclusion \\
\cmidrule(lr){1-2}
K-property & Complete information loss\\
\bottomrule
\end{tabular}}\endgroup
\caption{The parallel between geodesic flow on hyperbolic surfaces and QFT in Rindler spacetime.}
    \label{tab:mostsymmetric}
\end{table}

\section{Application to holography}
\label{sec:holo}

This section uses the previously introduced formalism to summarize how the different elements of the ergodic hierarchy can be seen as various diagnostics of the emergence of semiclassical features of the bulk of a holographic theory. 

\subsection{Mixing and information loss}

The weakest properties of the ergodic hierarchy have to do with mixing, i.e. decay of two-point functions. In holography, the late-time decay of the two-point function at high temperature has been related to information loss at large $N$ \cite{Maldacena:2001kr,Festuccia:2005pi,Festuccia:2006sa,Furuya:2023fei}. It is also expected that the analytical properties of this large $N$ real time two-point functions give information on the black hole interior and the singularity. Therefore, one can interpret mixing as a signature of the emergence of at least some stringy notion of spacetime. On the other hand, the late time decay of two-point functions at high temperature is expected in weakly coupled gauge theories, whose holographic duals are very far from a usual local theory \cite{Festuccia:2005pi,Festuccia:2006sa}. Moreover, mixing on its own does not guarantee the emergence of a modular Anosov or K-structure. On the other hand, emergent half-sided modular inclusions are very related to these stronger properties.

\subsection{Emergent half-sided modular inclusions}

In \cite{Leutheusser:2021frk,Leutheusser:2021qhd,Leutheusser:2022bgi}, it was pointed out that the emergence of a half-sided modular inclusion in the large $N$ limit of a high temperature thermofield double state in AdS/CFT is related to the emergence of a black hole horizon and of time in the black hole interior, see Figure \ref{fig:HSMIAdS}. The commutation relation between null translations and boosts along the horizon in a half-sided modular inclusion is exactly \eqref{eq:rindler}, which corresponds to a modular Anosov structure, and allows to recover some of the Poincaré symmetry, following \cite{DeBoer:2019kdj}.

These emergent half-sided modular inclusions relate to the emergence of a thermodynamic arrow of time in the bulk. Recalling the interpretation of the nesting of the algebras inside a K/refining system, the fact that the algebra of observables becomes smaller and smaller can be interpreted as some emergent form of information loss.

\begin{figure}
\centering
\includegraphics[scale=.65]{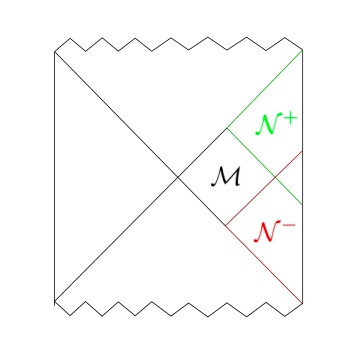}
\caption{The half-sided modular inclusions of Leutheusser--Liu, which are reinterpreted here as quantum Anosov systems. The algebras $\mathcal{N}^+$ and $\mathcal{N}^-$ represent the strict algebras of the right wedge algebra $\mathcal{M}$ of the thermofield double. They can be obtained from null translations along the horizon which, together with modular flow, give rise to an Anosov structure.}
\label{fig:HSMIAdS}
\end{figure}

\subsection{Remark on the role of type III$_1$}

In \cite{Leutheusser:2021frk,Leutheusser:2021qhd,Leutheusser:2022bgi}, it was pointed out that the emergence of type III$_1$ factors in the semiclassical limit of holography is a crucial feature in order to allow for the emergence of spacetime. What is now easy to note is that while type III$_1$ is necessary (mixing can only happen in a type III$_1$ factor \cite{Furuya:2023fei}), it may not be sufficient on its own to guarantee the emergence of a good notion of ``horizon" physics. Indeed, properties attached to half-sided modular inclusions, like the Anosov property and the K-property, are analogs of strictly stronger properties from the ergodic hierarchy.

\subsection{Some open questions}

It would be very interesting to reverse the logic of this note, and to ask whether given a holographic theory the $G_N\rightarrow 0$ limit, some notions of horizons, emergent times and local Poincaré symmetry appear in the bulk. It has been shown here that it may be possible to make progress on such a question by studying where the $G_N\rightarrow 0$ limit of the theory sits in the ergodic hierarchy. 

In the case of AdS/CFT, the relevant case is that of the large $N$ theory, which corresponds to generalized free fields. It was shown in \cite{Herman_Takesaki_1970} and reinterpreted in modern language in \cite{Furuya:2023fei} that the absolute continuity of the spectral density of these fields with respect to the Lebesgue measure guarantees that at finite temperature, the dynamics is strongly mixing at the level of the von Neumann algebra. In particular, this implies that the corresponding von Neumann algebra has type III$_1$. But what about the higher levels of the ergodic hierarchy introduced here? In particular:

\begin{itemize}
\item When is the K-property satisfied for generalized free fields at finite temperature?
\item Is there a quantitative characterization of the K-property in the quantum case, in analogy with complete positivity of KS entropy in the classical case?
\item To what extent can the K-property characterize the semiclassical nature of a bulk theory? 
\item Can the K-property be leveraged to obtain a notion of arrow of time, horizon physics and black hole interiors even when the bulk theory is subject to strong stringy effects?
\end{itemize}

\section*{Acknowledgements}

I am very grateful to Hong Liu for encouraging me to put this note together, to Matilde Marcolli for many related discussions, and to Allic Sivaramakrishnan for helpful comments on the draft.
\bibliographystyle{apsrev4-1}
\bibliography{refs,tfdmm}

\end{document}